\documentclass[1p]{elsarticle}

\usepackage[T2A]{fontenc}
\usepackage[utf8]{inputenc}
\usepackage[english]{babel}
\usepackage{amssymb,amsfonts,amsmath,amsthm}
\usepackage{xcolor}

\journal{Journal of \LaTeX\ Templates}

\bibliographystyle{elsarticle-num}
\newtheorem{theorem}{Theorem}
\newtheorem{consequence}{Consequence}

\newcommand{\p}{\partial}
\newcommand{\g}{\mathfrak{g}}
\newcommand{\G}{\mathbf{G}}

\renewcommand{\O}{{\cal O}}

\newcommand{\eqdef}{\stackrel{\mathrm{def}}{=}}

\begin{document}
\selectlanguage{english}

\begin{frontmatter}

\title{Constructing a complete integral of the Hamilton--Jacobi equation on pseudo-Riemannian spaces with simply transitive groups of motions}

\author{A.~A.~Magazev\fnref{myfootnote}}
\address{Omsk State Technical University, Prospect Mira, 11, Omsk, Russia, 644050}
\ead{magazev@omgtu.ru}

\begin{abstract}
In this work, an efficient method for constructing a complete integral of the geodesic Hamilton–Jacobi equation on pseudo-Riemannian manifolds with simply tran\-sitive groups of motions is suggested. The method is based on using a special transition to canonical coordinates on coadjoint orbits of the group of motion. As a non-trivial example, we consider the problem of constructing a complete integral of the geodesic Hamilton–Jacobi equation in the McLenaghan--Tariq--Tupper spacetime. An essential feature of this example is that the Hamilton-Jacobi equation is not separable in the corresponding configuration space.
\end{abstract}

\begin{keyword}
	
\MSC[2010] 35Q75\sep 70G65 \sep 22E70
\end{keyword}

\end{frontmatter}


\section{Introduction}
\label{intro}
Let $(M, g)$ be a pseudo-Riemannian manifold of dimension $n$ and let $x = ( x^1, \dots, x^n )$ be a local system of coordinates on~$M$. 
The \textit{geodesic Hamilton--Jacobi equation} on $(M, g)$ is the first-order partial differential equation  
\begin{equation}
\label{eq:01}
g^{ij} \frac{\p S}{\p x^i} \frac{\p S}{\p x^j} = m^2,
\end{equation}
where $g^{ij}$ are the contravariant components of the metric, $m$ is a real non-negative parameter. Hereafter, we assume a summation over the repeated indices (Einstein notation). 
A \textit{complete integral} of this equation is a solution $S(x;\alpha)$ depending on $n$ real parameters $\alpha = (\alpha_1, \dots, \alpha_n)$ such that
\begin{equation}
\label{eq:02}
\det \left \| \frac{\p^2 S(x;\alpha)}{\p x^i\, \p \alpha_j} \right \| \neq 0.
\end{equation}

The problem of finding a complete integral of the Hamilton--Jacobi equation is closely related to the problem of solving the geodesic equations (see, for example,~\cite{Eis66}). 
Indeed, if we know a complete integral of \eqref{eq:01}, then, by purely algebraic manipulations, we can construct integral curves of the Hamiltonian system with the \textit{geodesic Hamiltonian} $H(x,p) = \frac{1}{2}\, g^{ij} p_i p_j$. 
This method of constructing solutions of the geodesic equations, called the \textit{Jacobi method}, is often used in analytical mechanics and general relativity~\cite{GolPooSaf00,Arn89}.

The most traditional approach to finding a complete integral for the Ha\-mil\-ton--Jacobi equation is based on the concept of \textit{separation of variables}. 
Although the theory of separation of variables has a long history going back to works of Jacobi, Levi-Civita, St\"ackel, and Eisenhart, modern physicists and mathematicians still demonstrate their interest in this field. 
We will not give a review of this theory in this paper; instead, we refer the reader to the works \cite{Woo75,Sha79,KalMil80,BagObu93,BenChaRas01,Ben16} where the problem of separation of variables in the geodesic Hamilton--Jacobi equation has been discussed in detail. 
Here we remark only that, as shown by Eisenhart \cite{Eis66,Eis34} and by Kalnins and Miller~\cite{KalMil80,KalMil81}, the geodesic separation implies the existence of Killing vectors and Killing tensors of order two. 
Note, however, that there are the so-called \textit{non-St\"ackel} pseudo-Riemannian manifolds for which \eqref{eq:01} is non-separable on the configuration space (some examples of such four-dimensional manifolds can be found in \cite{Kli00}).   
Thus, the development of new alternative approaches to constructing complete integrals of the geodesic Hamilton-Jacobi is also important for physical and geometric applications.

There are different ways to go beyond the framework of the method of separation of variables.
For instance, instead of coordinate transformations in~$M$, one can consider more general canonical transformations of the whole phase space $T^* M$. First interest examples of such separation were pointed out in rigid body dynamics (for instance, the Kovalevskaya top and the Steklov--Lyapunov systems). Other examples of separable systems which have non-St\"ackel form can be found in Babelon \textit{et al}.~\cite{BabBerTal03}. Unfortunately, there is no universal method for separating variables in such problems, and the techniques used here are rather nontrivial.

Another alternative to the method of separation of variables is the \textit{symplectic reduction} procedure~\cite{MarWei74,MarAbr78}.
The advantage of this approach is the possibility of integrating the geodesic equations without involving ``hidden symmetries'' related to Killing tensors of order two (and higher orders).
Recall that the main idea of the symplectic reduction is to apply the noncommutative symmetries associated with conservation laws to decrease the number of phase variables of a mechanical system required for describing its dynamics. 
The difficulty, however, is the fact that this procedure is directly applied to Hamiltonian systems, while the Hamilton--Jacobi equation is a partial differential equation. 
Nevertheless, there is some success in combining both symplectic reduction theory and Hamilton–Jacoby theory. In particular, H.~Wang has proved a number of Hamilton--Jacobi theorems for regular reducible Hamiltonian systems on cotangent bundles and employed these results to the investigation of some rigid body and heavy top systems~\cite{Wan17}. The more general results were given by M.~de Le\'on \textit{et al}.~\cite{LeoDieVaq17} and are related to the reduction and reconstruction procedures for the Hamilton--Jacobi equation with symmetries. 
However, these authors restricted themselves only $G$-invariant solutions of the Hamilton--Jacobi equation and did not practically consider the problem of constructing its complete solution.

In this paper, we solve the problem of constructing a complete integral for the geodesic Hamilton--Jacobi equation on a pseudo-Riemannian manifold $(M, g)$ admitting a simply transitive isometry group~$G$. 
Following the basic idea of symplectic reduction, we reduce this problem to an auxiliary first-order partial differential equation on Lagrangian submanifolds of regular coadjoint orbits of the group $G$. 
This auxiliary equation involves fewer independent variables than the original one and, in some cases, can be solved by quadratures.
An essential feature of our method is an explicit formula reconstructing a complete integral of the original Hamilton--Jacobi equation from a complete integral of the reduced equation (see the formula~\eqref{eq:30}). It should be also emphasized that the technique for constructing a complete integral of \eqref{eq:01} developed in the paper is fully constructive: all steps of the technique are reduced to using only quadratures and tools of linear algebra.

The paper is organized as follows. In Sec.~\ref{sec:1}, we recall some facts concerning pseudo-Riemannian manifolds with simply transitive groups of motions.
In particular, we write the Hamilton--Jacobi equation \eqref{eq:01} in terms of invariant vector fields on $(M,g)$. 
Sec.~\ref{sec:2} is devoted to coadjoint orbits of Lie groups, especially the problem of constructing canonical coordinates (the Darboux coordinates) on the orbits.
We pay special attention to a specific class of canonical coordinates that are related to polarizations of Lie algebras. 
In Sec.~\ref{sec:3}, we prove a theorem containing the main result of the present paper. 
According to this theorem, a complete integral of ~\eqref{eq:01} can be constructed by solving an auxiliary Hamilton--Jacobi equation on the invariant Lagrangian submanifolds of regular coadjoint orbits of $G$, where $G$ is the isometry group of the pseudo-Riemannian manifold~$(M, g)$. 
For the Lie groups with two-dimensional regular orbits, this result allows us to integrate \eqref{eq:01} by quadratures alone.

In the final section, as an application of our method, we consider the problem of constructing a complete integral of the geodesic Hamilton--Jacobi equation in the \textit{McLenaghan--Tariq--Tapper spacetime}. 
The metric of this spacetime was found by McLenaghan and Tariq~\cite{LenTar75} and Tapper \cite{TarTap75} as an electrovacuum solution whose electromagnetic tensor does not share the spacetime symmetry.
It is noteworthy that the geodesic Hamilton--Jacobi equation cannot be solved by the separation of variables for this spacetime.
Nevertheless, employing our method, we construct a complete integral of \eqref{eq:01} in terms of incomplete elliptic integrals.

\section{The Hamilton--Jacobi equation in terms of invariant vector fields}
\label{sec:1}

We start with recalling some facts concerning pseudo-Riemannian manifolds with simply transitive groups of motions. 
For additional details and references, we refer the reader to the book of Stephani \textit{et al.}~\cite{Ste03}.

Let $(M,g)$ be a differentiable pseudo-Riemannian manifold admitting an $n$-dimensional group of motions~$G$:
$$
\tau_z^* g = g,\quad z \in G.
$$
Here $\tau_z: M \to M$ is the transformation associated with the group element~$z$. 
Everywhere in this paper, we assume that the group $G$ acts on $M$ \textit{on the left}, that is
$$
\tau_{z_1 z_2} x = \tau_{z_1} \tau_{z_2} x,\quad
x \in M,\ z_1, z_2 \in G.
$$

Suppose that the action of $G$ on $M$ is \textit{simply transitive}. 
This means that for any two points $x_1, x_2 \in M$ there exists one and only one group element $z \in G$ such that $x_2 = \tau_z x_1$. 
Choose a point $x_0 \in M$ and define a mapping $\psi_{x_0}: G \to M$ by
\begin{equation}
\label{eq:05}
\psi_{x_0}(z) \eqdef \tau_z\, x_0,\quad
z \in G.
\end{equation}
It is clear that this mapping establishes a smooth one-to-one correspondence between points of~$M$ and elements of $G$. 
In particular, the dimensions of manifold $M$ and the Lie group $G$ coincide: $\dim M = \dim G = n$.

We recall that the \textit{left translation} associated with $z \in G$ is a mapping $L_z: G \to G$ such that $L_z(z') = z z'$ for all $z' \in G$. 
From (\ref{eq:05}) it follows that 
\begin{equation}
\label{eq:06}
\psi_{x_0} \circ L_z = \tau_z \circ \psi_{x_0},\quad
z \in G,
\end{equation}
i.e. the mapping $\psi_{x_0}$ is equivariant with respect to the actions of $G$ on $G$ and $M$, respectively. 
In analogy with the left translation, one can define the \textit{right translation} $R_z: G \to G$ by $R_z(z') = z z'$. 
As can be readily seen, the right translations commute with the left ones, $R_z \circ L_{z'} = L_{z'} \circ R_z$, $z, z' \in G$.

Let $\g = T_e G$ be the Lie algebra of the group $G$ and let $e_1, \dots, e_n$ be a basis in~$\g$. 
The Lie bracket $[\cdot,\cdot]$ on~$\g$ is completely determined by the commutation relations between its basis elements: 
\begin{equation*}
\label{eq:07}
[e_i, e_j] = C_{ij}^k\, e_k.
\end{equation*}
The coefficients $C_{ij}^k$ are known as the \textit{structure constants} of the group~$G$. 

Consider the left-invariant vector field $l_i(z) = (L_z)_*\,e_i$ on $G$ corresponding to the basis vector~$e_i \in \g$ and denote by $\xi_i$ the image of $l_i$ under the mapping~$\psi_{x_0}$: 
\begin{equation}
\label{eq:08}
\xi_i(\tau_z\, x_0) \eqdef (\psi_{x_0})_*\, l_i(z),\quad
z \in G.
\end{equation}
By virtue of \eqref{eq:06}, the vector filed $\xi_i$ is invariant with respect to the action of~$G$ on~$M$. 
Moreover, it is easy to show that 
$$
[\xi_i, \xi_j] = C_{ij}^k\, \xi_k,
$$
i.~e. the vector fields $\xi_i$ span a Lie algebra $\g_L(M)$ isomorphic to the Lie algebra~$\g$.

Similarly, we can define the vector fields $\eta_i$ on $M$ by the formula
$$
\eta_i(\tau_z\, x_0) \eqdef (\psi_{x_0})_*\, r_i(z),\quad
z \in G,
$$
where $r_i(z) = (R_z)_*\, e_i$ is the right-invariant vector field on $G$ associated with $e_i \in \g$. 
Unlike the vector fields $\xi_i$, the fields $\eta_i$ are not invariant under the group action $G$ on $M$. 
Furthermore, these vector fields commute with $\xi_i$ and form a Lie algebra $\g_R(M)$, which is anti-isomorphic to the Lie algebra $\g$:
$$
[\eta_i, \xi_j] = 0,\quad
[\eta_i, \eta_j] =  - C_{ij}^k\, \eta_k.
$$
Taking into account that the vector fields $- r_i$ are infinitesimal generators of the left translations, we obtain
$$
\left ( \eta_i \varphi \right )(\tau_z x_0) = \frac{d}{dt}\, \varphi \left ( \tau_{\exp(t e_i)z}\, x_0 \right ) \big|_{t = 0} = \frac{d}{dt}\, \varphi \left ( \tau_{\exp(t e_i)}\, \tau_z x_0 \right ) \big|_{t = 0},
$$
for any function $\varphi \in C^\infty(M)$. 
From this, it is clear that the vector fields $- \eta_i$ are the infinitesimal generators of $G$ acting on $M$. 
Since the group $G$ acts on $M$ by isometries, we conclude that $- \eta_i$ are the \textit{Killing vector fields} of the pseudo-Riemannian manifold $(M,g)$.

Using~\eqref{eq:06} and the invariance of the metric under the group $G$, we obtain
\begin{equation}
\label{eq:10}
g(\xi_i, \xi_j) = \G_{ij},
\end{equation}
where the constants $\G_{ij}$ are defined as
\begin{equation*}
\label{eq:11}
\G_{ij} \eqdef g((\psi_{x_0})_*\, e_i, (\psi_{x_0})_*\, e_j).
\end{equation*}

It follows from \eqref{eq:08} that the vector fields $\xi_i$ form a basis of the tangent space $T_x M$ at each point $x$ of $M$. 
Consider the collection of 1-forms $\omega^i$ on $M$ uniquely determined by the vector fields $\xi_i$: $\langle \omega^i, \xi_j \rangle = \delta^i_j$. 
These 1-forms form a basis of the cotangent space $T^*_x M$ at $x \in M$ called the \textit{dual} to the basis $\{ \xi_i \}$. 
Using \eqref{eq:10}, we can express the metric $g$ in terms of the 1-form $\omega^i$ as follows:
\begin{equation}
\label{eq:50}
g = \mathbf{G}_{ij}\, \omega^i \omega^j.
\end{equation}
If $(x^1, \dots, x^n)$ is a local coordinate system on~$M$, then for the covariant and contravariant components of the metric we obtain
\begin{equation}
\label{eq:11}
g_{ij}(x) = \G_{kl}\, \omega^k_i(x) \omega^l_j(x),\quad
g^{ij}(x) = \G^{kl}\, \xi_k^i(x) \xi_l^j(x).
\end{equation}
Here $\omega^k_i(x)$ and $\xi_k^i(x)$ are the coordinate components of the 1-form $\omega^k$ and the vector field $\xi_k$, respectively, $\mathbf{G}^{ij} \mathbf{G}_{jk} = \delta^i_k$.

It is immediately clear from the above results that the Hamilton-Jacobi equation \eqref{eq:01} on $(M,g)$ can be expressed in terms of the invariant vector fields $\xi_i$ as
\begin{equation}
\label{eq:12}
\G^{ij} \left ( \xi_i S \right ) \left ( \xi_j S \right )  = m^2.
\end{equation}
Here $\xi_i S$ denotes the directional derivative of a function $S \in C^\infty(M)$ along the $G$-invariant vector field~$\xi_i$.

\section{Canonical coordinates on coadjoint orbits}
\label{sec:2}

The method for constructing a complete integral of the Hamilton--Jacobi equation \eqref{eq:12} developed below substantially uses canonical coordinates on coadjoint orbits of the group of motion~$G$.  In the present section, we recall necessary definitions and describe a convenient algebraic method to construct such coordinates, following the papers~\cite{Shi00,KamPer85}. It should be noted that the idea that the canonical coordinates on coadjoint orbits of Lie groups may be used to get a complete integral of the Hamilton--Jacobi equations is not new (see, for example, \cite{AdaHarHur93}). Here, however, we consider a special class of canonical coordinates related to real polarizations of coadjoint orbits. It turns out that these special coordinates are extremely efficient for reconstructing a complete integral of the Hamilton--Jacobi equation~\eqref{eq:12}.

For any $z \in G$, the automorphism $R_{z^{-1}} \circ L_z: G \to G$ leaves the identity element $e \in G$ fixed. 
Its differential at $e$ is a linear map of the Lie algebra $\mathfrak{g} \simeq T_e G$ into itself. 
This map is denoted by $\mathrm{Ad}_z \eqdef (R_{z^{-1}})_* (L_z)_*$ and called the \textit{adjoint representation} of the group~$G$. 
The \textit{coadjoint representation} of~$G$ is a representation on the dual space $\mathfrak{g}^*$, that is dual to the adjoint representation: $\mathrm{Ad}^*_z \eqdef (\mathrm{Ad}_{z^{-1}})^*$. 
More explicitly, the coadjoint representation is defined by the formula
\begin{equation}
\label{eq:13}
\langle \mathrm{Ad}^*_z\, \lambda, Z \rangle = \langle \lambda, \mathrm{Ad}_{z^{-1}}\, Z \rangle, 
\end{equation}
where $\lambda \in \mathfrak{g}^*$, $Z \in \mathfrak{g}$, and $\langle \cdot, \cdot \rangle$ denotes the natural pairing between the spaces $\mathfrak{g}$ and~$\mathfrak{g}^*$.

The dual space $\mathfrak{g}^*$ of the Lie algebra $\mathfrak{g}$ admits a natural Poisson structure. 
The corresponding Poisson bracket, called the \textit{Lie--Poisson bracket}, has the form 
\begin{equation}
\label{eq:14}
\{ \varphi, \psi \}(f) = C_{ij}^k\, f_k\, \frac{\p \varphi(f)}{\p f_i}\,\frac{\p \psi(f)}{\p f_j},\quad
\varphi, \psi \in C^\infty(\mathfrak{g}^*),
\end{equation}
where $(f_1, \dots, f_n)$ are the coordinates of $f \in \mathfrak{g}^*$ with respect to the basis $\{ e^i \}$ that is dual to the basis $\{ e_i \}$ of~$\mathfrak{g}$. 
In general, the Lie--Poisson bracket~\eqref{eq:14} is degenerate. 
This means that the dual space $\mathfrak{g}^*$ admits a stratification by the symplectic leaves of the Lie--Poisson bracket. 
It has been shown by A.~A.~Kirillov \cite{Kir76} and B.~Kostant~\cite{Kos70} that the symplectic leaves are exactly the coadjoint orbits of the group $G$. 
Thus, the above-mentioned stratification is, in fact, the decomposition of $\g^*$ into coadjoint orbits.

Let $\mathcal{O}_\lambda$ be the coadjoint orbit passing through $\lambda \in \mathfrak{g}^*$. 
From the above, the restriction of the Lie--Poisson bracket to the orbit $\O_\lambda$ is non-degenerate and therefore defines the symplectic form $\omega_\lambda$, called the \textit{Kirillov--Kostant form}, on it. 
By construction, the form $\omega_\lambda$ is invariant under the coadjoint action of $G$ and hence each coadjoint orbit possesses a canonical $G$-invariant symplectic structure.

It follows from Darboux's theorem that there exist local coordinates  $(p_a, q^a)$ on $\mathcal{O}_\lambda$ in which the Kirillov–Kostant form $\omega_\lambda$ takes the form
$$
\omega_\lambda = dp_a \wedge dq^a,\quad
\alpha = 1, \dots, \frac{1}{2}\, \dim \mathcal{O}_\lambda.
$$
The coordinates $(p_a, q^a)$ are called the \textit{canonical coordinates}. 
It is easy to see that the construction of canonical coordinates on the orbit $\mathcal{O}_\lambda$ is reduced to the problem of finding functions $f_i(q,p; \lambda)$, $i = 1, \dots, n$, such that
\begin{equation}
\label{eq:16(a)}
f_i(0, 0; \lambda) = \lambda_i,
\end{equation}
\begin{equation}
\label{eq:16}
\frac{\p f_i(q,p; \lambda)}{\p p_a}\, \frac{\p f_j(q,p; \lambda)}{\p q^a} - 
\frac{\p f_i(q,p; \lambda)}{\p q^a}\, \frac{\p f_j(q,p; \lambda)}{\p p_a} = 
C_{ij}^k\, f_k(q,p; \lambda),
\end{equation}
and
\begin{equation}
\label{eq:17}
\mathrm{rank}\, \left \| \frac{\p f_i(p,q; \lambda)}{\p q^a}, \frac{\p f_i(p,q; \lambda)}{\p p_a} \right \| = \frac{1}{2} \dim \mathcal{O}_\lambda.
\end{equation}

Let us distinguish a special class of canonical coordinates whose functions $f_i(q,p; \lambda)$ are linear in ``momentum''\ variables~$p_a$:
\begin{equation}
\label{eq:18}
f_i(q, p; \lambda) = \zeta_i^a(q) p_a + \chi_i(q; \lambda).
\end{equation}
In this case, the condition~\eqref{eq:16} can be rewritten in the form 
\begin{equation}
\label{eq:19}
\zeta_i^a(q)\, \frac{\p \zeta_j^b(q)}{\p q^a} - 
\zeta_j^a(q)\, \frac{\p \zeta_i^b(q)}{\p q^a} = 
C_{ij}^k\, \zeta_k^b(q),
\end{equation}
\begin{equation}
\label{eq:20}
\zeta_i^a(q)\, \frac{\p \chi_j(q; \lambda)}{\p q^a} - 
\zeta_j^a(q)\, \frac{\p \chi_i(q; \lambda)}{\p q^a} = 
C_{ij}^k\, \chi_k(q; \lambda).
\end{equation}
Furthermore, it follows from  \eqref{eq:17} that
\begin{equation}
\label{eq:19(a)}
\mathrm{rank}\, \| \zeta_i^a(q) \| = \frac{1}{2}\, \dim \mathcal{O}_\lambda.
\end{equation}

In the paper \cite{Shi00}, it is shown that the system of equations~\eqref{eq:19} has a solution satisfying~\eqref{eq:19(a)} if and only if there exists a subalgebra $\mathfrak{h} \subset \mathfrak{g}$ such that
\begin{equation}
\label{eq:21}
\dim \mathfrak{h} = \dim \mathfrak{g} - \frac{1}{2}\, \dim \mathcal{O}_\lambda.
\end{equation}
In this case, the solutions of~\eqref{eq:19} have the following interpretation: the vector fields  $\zeta_i = \zeta_i^a(q) \p_{q^a}$ are infinitesimal generators of a local transitive action of the group $G$ on some smooth manifold $Q$ of dimension $\dim Q = \dim \mathcal{O}_\lambda/2$. 
It will be convenient to assume that $G$ acts on $Q$ on the right. 
Then $Q$ is diffeomorphic to the quotient space  $H \setminus G$, where $H$ is the stationary subgroup of the point~$q = 0$. 
Also, note that the manifold $Q$ can be interpreted as a $G$-invariant Lagrangian submanifold of the symplectic manifold~$\mathcal{O}_\lambda$.

Further, the system of equations \eqref{eq:20} has a solution if and only if the subalgebra $\mathfrak{h}$ is subordinate to the element~$\lambda$~\cite{Shi00}:
\begin{equation}
\label{eq:22}
\langle \lambda, [\mathfrak{h}, \mathfrak{h}] \rangle = 0.
\end{equation}
A subalgebra $\mathfrak{h} \subset \mathfrak{g}$ satisfying \eqref{eq:21} and \eqref{eq:22} is called \textit{polarization} of the element~$\lambda \in \mathfrak{g}^*$. 
Thus, the canonical coordinates on $\mathcal{O}_\lambda$ defined by the transition functions \eqref{eq:18} exist if and only if the element $\lambda$ has a polarization.

Currently, the problem of the existence of polarizations in Lie algebras has been well studied. 
A quite complete list of results on this problem is given in Dixmier's book~\cite{Dix77}. 
In particularly, polarizations always exist for nilpotent and completely solvable Lie algebras.
On the other hand, for a given semisimple Lie algebra $\g$, not every element from $\g^*$ admits a polarization. 
But, in spite of this fact, the canonical transition \eqref{eq:18} can still be constructed in this case as well. 
This requires the consideration of polarizations in the complexifications of Lie algebras (see \cite{Shi00} for details).

We note that if a polarization $\mathfrak{h}$ of $\lambda \in \mathfrak{g}^*$ exists, then the problem of constructing canonical coordinates on the coadjoint orbit $\mathcal{O}_\lambda$ can be constructively solved by the tools of linear algebra. 
Indeed, in the paper \cite{MagMikShi15}, it is shown that infinitesimal generators of $G$ acting on a homogeneous space $Q = H \setminus G$ can be constructed from the structure constants of Lie group $G$ by the computation of matrix inversions and matrix exponents. 
The functions $\chi_i(q; \lambda)$, which satisfy~\eqref{eq:20}, can also be found without direct solving the differential equations. 
An algebraic method of constructing such functions is described in the work~\cite{BarShi09}.

\section{Constructing a complete integral of the Hamilton--Jacobi equation}
\label{sec:3}

Now we come to the question of how to construct a complete integral for the geodesic Hamilton--Jacobi equation~\eqref{eq:12}. 
It turns out that this problem can be reduced to finding a complete integral for Hamilton–Jacobi equations on Lagrangian submanifolds of regular coadjoint orbits of~$G$. 
Thus, solving the original differential equation with $n = \dim G$ independent variables can be reduced to solving some auxiliary differential equation in which the number of independent variables equals $r = \frac{1}{2} \, \dim \mathcal{O}_\lambda < n$.

Before formulating the basic result, we introduce some additional constructions.

Let $G$ be a Lie group and $H \subset G$ be its connected closed subgroup. 
Denote by $\mathfrak{g}$ and $\mathfrak{h}$ the Lie algebras of the groups $G$ and $H$, respectively. 
Let us consider the right homogeneous space $Q = H \setminus G$. 
The group $G$ naturally acts on $Q$; we denote this action as $\rho: G \times Q \to Q$, $q \mapsto \rho_z q$, $q \in Q$, $z \in G$. 
To each basis vector $e_i$ of the Lie algebra $\mathfrak{g}$, we can associate the infinitesimal generator $\zeta_i$ of the action of $G$ on $Q$ by the formula
$$
\label{eq:27}
\left ( \zeta_i\, f \right )(q) \eqdef \frac{d}{dt}\, f \left ( \rho_{\exp(t e_i)}\, q \right ) \Big|_{t = 0},\quad
f \in C^\infty(Q).
$$

Let us define the map $\varphi: M \times Q \to Q$ by
\begin{equation}
\label{eq:25}
\varphi(x,q) \eqdef \rho_{\psi_{x_0}^{-1}(x)}\, q,\quad
x \in M,\quad
q \in Q,
\end{equation}
where $\psi_{x_0}^{-1}: M \to G$ is the inverse of the diffeomorphism~\eqref{eq:05}. 
Clearly, $\varphi(x_0, q) = q$. 
Moreover, from the equality $\rho_{z_1 z_2} = \rho_{z_2} \rho_{z_1}$ it follows that
\begin{equation}
\label{eq:26}
\rho_z\, \varphi(x, q) = \varphi \left ( \psi_{x_0} (\psi_{x_0}^{-1}(x) z), q \right ),
\end{equation} 
for all $q \in Q$, $x \in M$, and $z \in G$. 
Let $U \subset M$ and $V \subset Q$ be the domains such that $\varphi(U \times V) \subset V$ and let $x^1, \dots, x^n$ and $q^1, \dots, q^r$ be the systems of local coordinates in $U$ and $V$, respectively. 
Denote by $\varphi^a(x,q) = \varphi^a(x^1, \dots, x^n, q^1, \dots, q^r)$ the collection of functions defining the map $\varphi(x,q)$ in the local coordinates, $a = 1, \dots, r$. 
Then, from \eqref{eq:26} and the definition of the vector field $\xi_i$, it follows that
\begin{equation}
\label{eq:28}
 \xi_i^j(x)\, \frac{\p \varphi^a(x,q)}{\p x^j} = \zeta_i^a(\varphi(x,q)).
\end{equation}
Here $\xi_i^j(x)$ and $\zeta_i^a(q)$ are the coordinate components of the vector fields $\xi_i$ and $\zeta_i$, respectively.

Note that the above relations are correct for any subalgebra $\mathfrak{h} \subset \mathfrak{g}$. 
Next, we apply these results to the case when $\mathfrak{h}$ is a polarization of an element~$\lambda \in \mathfrak{g}^*$. 
As we saw earlier, the homogeneous space $Q = H \setminus G$, in this case, is an invariant Lagrangian submanifold of the coadjoint orbit~$\mathcal{O}_\lambda$.

Before proceeding, we recall some terminology concerning coadjoint orbits.

An element $\lambda \in \mathfrak{g}^*$ is called \textit{regular} if the coadjoint orbit $\mathcal{O}_\lambda$ passing through $\lambda$ has the maximal dimension in~$\mathfrak{g}^*$. 
For a regular element $\lambda \in \mathfrak{g}^*$, the dimension of $\mathcal{O}_\lambda$ can be calculated by the formula
$$
\dim \mathcal{O}_\lambda = \dim \mathfrak{g} - \mathrm{ind}\, \mathfrak{g},
$$
where the non-negative integer $\mathrm{ind}\, \mathfrak{g}$, called the \textit{index} of the Lie algebra $\mathfrak{g}$, is defined as
\begin{equation}
\label{eq:24}
\mathrm{ind}\, \mathfrak{g} \eqdef \inf \limits_{\lambda \in \mathfrak{g}^*} \mathrm{corank}\, \| C_{ij}^k\, \lambda_k \|.
\end{equation}
For a semisimple Lie algebra $\mathfrak{g}$, the index $\mathrm{ind}\, \mathfrak{g}$ coincides with its rank.

Let us fix a regular element $\lambda_0 \in \mathfrak{g}^*$. 
Since $\lambda_0$ is in general position, the coadjoint orbits close to $\mathcal{O}_{\lambda_0}$ are diffeomorphic to it. 
Thus, there exists a small neighborhood $U \subset \mathfrak{g}^*$ of $\lambda_0$ that is stratified on the homomorphic $G$-fibers. 
Consider the quotient space $J = U / G$, which is the base space of this fibration. 
It is clear that $\dim J = \mathrm{ind}\, \mathfrak{g}$. 
Let $j = (j_1, \dots, j_{\mathrm{ind}\, \mathfrak{g}})$ be the local coordinates on the manifold $J$ and denote by $\lambda(j)$ some smooth local section of the fibration $U \to U / G$.
Then the correspondence $j \to \mathcal{O}_{\lambda(j)}$ defines some smooth one-to-one parametrization of the coadjoint orbits in the neighborhood~$U$.

Now we formulate the main result.

\begin{theorem}
\label{the:01}
Let $\lambda(j) = \lambda(j_1, \dots, j_{\mathrm{ind}\, \mathfrak{g}})$ be a smooth (local) parametrization of regular coadjoint orbits in $\mathfrak{g}^*$, $\mathfrak{h} \subset \mathfrak{g}$ be a polarization of the element $\lambda(j)$, and $f_i(q,p; \lambda(j)) = \zeta_i^a(q) p_a + \chi_i(q; \lambda(j))$ be the functions that define the transition to canonical coordinates on $\mathcal{O}_{\lambda(j)}$. 
Let us assume that there exists the closed subgroup $H \subset G$ with the Lie algebra $\mathfrak{h}$. 
On the homogeneous space $Q = H \setminus G$, we consider the differential equation
\begin{equation}
\label{eq:29}
\mathbf{G}^{ij} f_i \left ( q, \frac{\p \tilde{S}}{\p q}; \lambda(j) \right ) f_j \left ( q, \frac{\p \tilde{S}}{\p q}; \lambda(j) \right ) = m^2,
\end{equation}
and denote by $\tilde{S}_j(q; \beta)$ its complete integral depending on a set of parameters $\beta = (\beta_1, \dots, \beta_{(\dim \mathfrak{g} - \mathrm{ind}\, \mathfrak{g})/2})$. 
Then the function
\begin{equation}
\label{eq:30}
S(x;\alpha) = \tilde{S}_j(\varphi(x,q); \beta) + \int \chi_k(\varphi(x,q); \lambda(j)) \omega^k(x)
\end{equation}
is a complete integral of the Hamilton--Jacobi equation~\eqref{eq:12}. 
Here $\alpha = (q, j, \beta)$ is the set of $n$ parameters, $\omega^k(x)$ are 1-forms dual to the vector fields $\xi_k(x)$, and the mapping $\varphi: M \times Q \to Q$ is defined by the formula~\eqref{eq:25}.
\end{theorem}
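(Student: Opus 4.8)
The plan is to check separately the two conditions that make \eqref{eq:30} a complete integral: that $S(x;\alpha)$ solves the Hamilton--Jacobi equation \eqref{eq:12}, and that it satisfies the non-degeneracy condition \eqref{eq:02} with respect to the $n$ parameters $\alpha=(q,j,\beta)$ (here $q$ supplies $r=\tfrac12\dim\O_{\lambda(j)}$ parameters, $\beta$ another $r$, and $j$ the remaining $\ind\g$, so that indeed $2r+\ind\g=n$). A preliminary issue is that the second term of \eqref{eq:30} is a line integral of the $1$-form $\Theta\eqdef\chi_k(\varphi(x,q);\lambda(j))\,\omega^k(x)$ on $M$ (with $q$ and $j$ frozen), so $S$ is well defined only once $\Theta$ is shown to be closed. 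Establishing $d\Theta=0$ is therefore the first step.

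To prove closedness I would read \eqref{eq:28} as $\langle d\varphi^a,\xi_i\rangle=\zeta_i^a(\varphi)$, i.e. $d\varphi^a=\zeta_i^a(\varphi)\,\omega^i$ in the coframe dual to $\{\xi_i\}$, and use the structure equation $d\omega^k=-\tfrac12\,C_{ij}^k\,\omega^i\wedge\omega^j$ that follows from $[\xi_i,\xi_j]=C_{ij}^k\xi_k$. Then $d\Theta=(\p\chi_k/\p q^a)\,\zeta_i^a(\varphi)\,\omega^i\wedge\omega^k+\chi_k\,d\omega^k$, whose first term, after antisymmetrization in $(i,k)$, carries the coefficient $\zeta_i^a\p_{q^a}\chi_k-\zeta_k^a\p_{q^a}\chi_i$. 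By the subordination relation \eqref{eq:20} this equals $C_{ik}^l\chi_l$, which cancels exactly against $\chi_k\,d\omega^k$; hence $d\Theta=0$ and $S$ is a (locally single-valued) function.

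Next I would verify \eqref{eq:12}. Writing $\tilde p_a=\p\tilde S_j/\p q^a$ evaluated at $\varphi(x,q)$, the identity $d\varphi^a=\zeta_i^a(\varphi)\,\omega^i$ gives $\xi_i\big[\tilde S_j(\varphi;\beta)\big]=\zeta_i^a(\varphi)\,\tilde p_a$, while the relation $d\!\int\Theta=\Theta$ gives $\xi_i\big[\int\Theta\big]=\langle\Theta,\xi_i\rangle=\chi_i(\varphi;\lambda(j))$. Adding the two and recalling $f_i(q,p;\lambda)=\zeta_i^a(q)p_a+\chi_i(q;\lambda)$ yields the clean identity $\xi_iS=f_i\big(\varphi(x,q),\tilde p;\lambda(j)\big)$. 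Substituting into the left-hand side of \eqref{eq:12} and using that the auxiliary equation \eqref{eq:29} holds identically on $Q$, in particular at the point $\varphi(x,q)\in Q$, gives $\G^{ij}(\xi_iS)(\xi_jS)=m^2$, so $S$ solves the Hamilton--Jacobi equation. This step uses only \eqref{eq:28}, the closedness already proved, and the identity \eqref{eq:29}.

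The main obstacle is the non-degeneracy \eqref{eq:02}. Since $\p S/\p x^k=(\xi_iS)\,\omega^i_k(x)$ and the coframe components $\omega^i_k$ do not depend on $\alpha$, the Hessian factors as $\p^2S/\p x^k\p\alpha_A=\big(\p(\xi_iS)/\p\alpha_A\big)\,\omega^i_k$; as $\det\|\omega^i_k\|\neq0$, condition \eqref{eq:02} is equivalent to $\det\|\p(\xi_iS)/\p\alpha_A\|\neq0$. By the previous step the vector $(\xi_1S,\dots,\xi_nS)=\big(f_i(\varphi,\tilde p;\lambda(j))\big)$ is, by \eqref{eq:16(a)}--\eqref{eq:17}, a point of the orbit $\O_{\lambda(j)}\subset\g^*$. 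I would then split the $n$ columns into the $2r$ columns $\p/\p(q,\beta)$ and the $\ind\g$ columns $\p/\p j$. For fixed $x$ the map $q\mapsto\varphi(x,q)=\rho_{\psi_{x_0}^{-1}(x)}q$ is a diffeomorphism of $Q$, and the completeness of $\tilde S_j$ on $Q$ ($\det\|\p^2\tilde S_j/\p q^a\p\beta_b\|\neq0$) shows that $(q,\beta)\mapsto(\varphi,\tilde p)$ is a local diffeomorphism; composing with the Darboux chart $(q,p)\mapsto f_i$ shows that the first block of columns is independent and spans $T\O_{\lambda(j)}$. For the transverse directions, contracting the relation $\sum_A c_A\,\p(\xi S)/\p\alpha_A=0$ with the differentials $dK_\mu$ of the $\ind\g$ independent Casimirs (which annihilate $T\O_{\lambda(j)}$) reduces it to $\sum_C(\p\kappa_\mu/\p j_C)\,c^j_C=0$, where $\kappa_\mu(j)=K_\mu(\O_{\lambda(j)})$; since $j$ is a regular transverse parametrization of orbits, $\det\|\p\kappa_\mu/\p j_C\|\neq0$ forces $c^j=0$, and then independence of the first block forces $c^q=c^\beta=0$. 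This gives $\det\|\p(\xi_iS)/\p\alpha_A\|\neq0$ and finishes the proof. The delicate points are exactly this transversality bookkeeping and the verification that the three parameter blocks fill out the orbit and its normal directions without overlap.
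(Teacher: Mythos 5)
Your proof is correct, and its core coincides with the paper's: both establish the key identity $\xi_i S = f_i\left(\varphi(x,q), \p \tilde{S}_j/\p q'; \lambda(j)\right)$ (the paper's \eqref{eq:31}) from \eqref{eq:28} and the duality $\langle \omega^k, \xi_i\rangle = \delta^k_i$, and both reduce the non-degeneracy condition \eqref{eq:02} to invertibility of the $\alpha$-Jacobian of the tuple $(\xi_i S)$ after pulling out the factor $\det\|\omega^i_k(x)\| \neq 0$. You deviate from the paper in two places, both to your credit. First, you prove that the form $\Theta = \chi_k(\varphi(x,q);\lambda(j))\,\omega^k$ is closed, via the structure equation $d\omega^k = -\tfrac12 C_{ij}^k\,\omega^i\wedge\omega^j$ together with the subordination relation \eqref{eq:20}, so that the line integral in \eqref{eq:30} is a well-defined local function; the paper takes this entirely for granted, even though its own computation \eqref{eq:31} implicitly uses $d\int\Theta = \Theta$. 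Second, for non-degeneracy the paper factors the Hessian through the intermediate variables $u=(q',p',j)$ into three square determinants \eqref{eq:34} and disposes of $\det\|\p f_k(u)/\p u^l\|$ by appealing to the fact that the $f_i$ ``define a local immersion of the orbit''; strictly speaking, the immersion property only accounts for the $\dim\mathcal{O}_{\lambda(j)}$ orbit-tangential columns, and one still needs the $j$-columns to be transverse to $T\mathcal{O}_{\lambda(j)}$. Your Casimir argument --- contracting a vanishing linear combination of columns with the differentials $dK_\mu$, which annihilate $T\mathcal{O}_{\lambda(j)}$ and produce $\p \kappa_\mu/\p j_C$ on the $j$-columns --- supplies exactly this missing transversality, using that $\lambda(j)$ is a regular section of the local orbit fibration; meanwhile your treatment of the $(q,\beta)$-columns (block-triangular Jacobian with $\det\|\p\varphi^a/\p q^b\|\neq 0$ and $\det\|\p^2\tilde{S}_j/\p q^a\p\beta_b\|\neq 0$) reproduces the paper's third determinant exactly. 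In short: same outline, but your version is more careful at precisely the two points where the paper is terse.
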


\begin{proof}
First we show that \eqref{eq:30} satisfies the Hamilton--Jacobi equation ~\eqref{eq:12}. 
From Eq.~\eqref{eq:28} and the fact that the 1-forms $\omega^k$ are dual to the vector fields $\xi_i$, we obtain 
\begin{multline}
\label{eq:31}
\xi_i^k(x)\, \frac{\p S(x; \alpha)}{\p x^k} = \xi_i^k(x)\, \frac{\p \varphi^a(x,q)}{\p x^k}\, \frac{\p \tilde{S}_j(\varphi(x,q); \beta)}{\p \varphi^a(x,q)} + \chi_k(\varphi(x,q); \lambda(j)) \langle \omega^k, \xi_i \rangle =
\\
= \zeta_i^a(\varphi(x,q))\, \frac{\p \tilde{S}_j(\varphi(x,q); \beta)}{\p \varphi^a(x,q)} + \chi_i(\varphi(x,q); \lambda(j)) = 
\\
= f_i \left ( q', \frac{\p \tilde{S}_j(q'; \beta)}{\p q'}; \lambda(j) \right ) \Big|_{q' = \varphi(x,q)}.
\end{multline}
Since $\tilde{S}_j(q; \beta)$ is a solution of~\eqref{eq:29}, it is clear that \eqref{eq:30} satisfies~\eqref{eq:12}.

Now we prove that the function \eqref{eq:30} obeys the condition~\eqref{eq:02}. 
We introduce the notation
$$
q'^a = \varphi(x,q),\quad
p'_a = \frac{\p \tilde{S}_j(q';\beta)}{\p q'^a},\quad
u = (q', p', J).
$$
Using~\eqref{eq:31}, we obtain
$$
\frac{\p^2 S(x; \alpha)}{\p x^i \p \alpha_j} = \omega^k_i(x)\, \frac{\p f_k(q', p', \lambda(j))}{\p \alpha_j} = \omega^k_i(x)\, \frac{\p f_k(u)}{\p u^l}\, \frac{\p u^l}{\p \alpha_j},
$$
whence
\begin{equation}
\label{eq:34}
\det \left \| \frac{\p^2 S(x;\alpha)}{\p x^i \p \alpha_j} \right \| = \det \| \omega^k_i(x) \| \cdot \det \left \| \frac{\p f_k(u)}{\p u^l} \right \| \cdot \det \left \| \frac{\p u^l}{\p \alpha_j} \right \|.
\end{equation}
Clearly, the first determinant on the right-hand side of the last equality is non-zero since the collection of 1-forms $\{ \omega^k(x) \}$ forms a basis in $T^*_x M$ at any $x \in M$. 
The second factor on the right-hand side of~\eqref{eq:34} is also non-zero, because, by construction, the functions $f_i(q,p;\lambda)$ define a local immersion of orbit $\mathcal{O}_\lambda$ in the dual space~$\mathfrak{g}^*$. 
In order to show that the third determinant cannot be zero, we rewrite it in the form
$$
\det \left \| \frac{\p u^l}{\p \alpha_j} \right \| = 
\left |
\begin{array}{ccc}
\frac{\p \varphi(x,q)}{\p q}	&	0	&	0	\\
0								&	I	&	0	\\
*								&	*	&	\frac{\p^2 \tilde{S}_j(q;\beta)}{\p q\, \p \beta}
\end{array}
\right |,
$$
From this, we obtain
$$
\det \left \| \frac{\p u^l}{\p \alpha_j} \right \| = \det \left \| \frac{\p \varphi^a(x,q)}{\p q^b} \right \| \cdot 1 \cdot \det \left \| \frac{\p^2 \tilde{S}_j(q;\beta)}{\p q^a\, \p \beta_b}\right \|.
$$
Since the function $\tilde{S}_j(q; \beta)$ is a complete integral of \eqref{eq:29}, the right hand side of the last equality does not vanish by the definition of the map~$\varphi$.
\end{proof}

The above theorem allows us to reduce the problem of construction of a complete integral of the Hamilton--Jacobi equation \eqref{eq:12} to the problem of finding a complete integral of subsidiary Hamilton--Jacobi equations on Lagrangian submanifolds of regular coadjoint orbits. 
In particular, if the dimension of regular coadjoint orbits is less than or equal to 2, then a complete integral of~\eqref{eq:12} can be found in quadratures. 
Indeed, if we use the equality
$$
\mathbf{G}^{ij} f_i(q,p; \lambda(j)) f_j(q,p; \lambda(j)) = m^2,
$$
we can express the variable $p$ as a function of $q$, $j$, and $\beta = m$: $p = p(q; j, m)$. 
Hence for the function $\tilde{S}_j(q; m)$, we have
$$
\tilde{S}_j(q; m) = \int p(q; j, m)\, dq.
$$
Substituting this function into~\eqref{eq:30}, we obtain a complete integral of the Hamilton--Jacobi equation~\eqref{eq:12}.
We thus have proved the following consequence of Theorem~\ref{the:01}.

\begin{consequence}
\label{con:1}
Let $(M,g)$ be a pseudo-Riemannian manifold with a simply transitive group of motion $G$. If the dimension of regular coadjoint orbits of $G$ is less than or equal to 2, then a complete integral of the geodesic Hamilton--Jacobi equation on $(M,g)$ can be found in quadratures.
\end{consequence}

\section{An example: a complete integral for the geodesic Hamilton--Jacobi equation in the McLenaghan--Tariq--Tupper spacetime}
\label{sec:4}

As an example of the application of our technique, let us consider the problem of constructing a complete integral of the geodesic Hamilton--Jacobi equation in the \textit{McLenaghan--Tariq--Tapper spacetime}. McLenaghan and Tariq found a solution to the Einstein--Maxwell equations whose electromagnetic tensor does not share the spacetime symmetry~\cite{LenTar75}.
The line element of this metric can be written as
\begin{multline}
\label{eq:41}
d s^2 = (d x^1)^2 + 2 d x^1 d x^2 - 2 k x^3 ( d x^1 + d x^2 ) d x^4 - 
\\
 - e^{- k x^2} d (x^3)^2 + \left [ k^2 (x^3)^2 - e^{k x^2} \right ] (d x^4)^2, 
\end{multline}
where $k$ is an arbitrary positive real number;  the special case of this metric corresponding to $k = 4$ was found by Tariq and Tupper~\cite{TarTap75}.

As it was shown in~\cite{LenTar75}, the group of motions $G$ of the metric \eqref{eq:41} is generated by the Killing vectors
$$
\eta_1 = \p_{x^1},\
\eta_2 = \p_{x^4},\
\eta_3 = k x^4 \p_{x^1} + \p_{x^3},\
\eta_4 = - 2 \p_{x^1} + \p_{x^2} + \frac{k}{2} \left ( x^3 \p_{x^3} - x^4 \p_{x^4} \right ),
$$
and is simply transitive by virtue of the condition $\det \| \eta_i^j(x) \| \neq 0$.
The corresponding group action $\tilde{x} = \tau_z(x)$, expressed in the local coordinates $x^i$, takes the form 
$$
\tilde{x}^1 = x^1 - z^1 + 2 \, z^4 - k e^{k z^4/2} z^3 x^4,\quad
\tilde{x}^2 = x^2 - z^4,\quad
\tilde{x}^3 = x^3 e^{-k z^4/2} - z^3,
$$
$$
\tilde{x}^4 = x^4 e^{k z^4/2} - z^2.
$$
Here, we denote by $z = (z^1, z^2, z^3, z^4)$ the group parameters.

Let us fix the point $x_0 = (0,0,0,0)$; then the mapping $\psi_{x_0}(z)$ defined by~\eqref{eq:05} can be written as
\begin{equation}
\label{eq:45}
\psi_{x_0}^1(z) = 2 z^4 - z^1,\quad
\psi_{x_0}^2(z) = -z^4,\quad
\psi_{x_0}^3(z) = -z^3,\quad
\psi_{x_0}^4(z) = -z^2.
\end{equation}
Using \eqref{eq:08}, we obtain the following expressions for the invariant vector fields~$\xi_i$:
\begin{equation}
\label{eq:47}
\xi_1 = - \p_{x^1},\
\xi_2 = - e^{ - \frac{k x^2}{2}} \left ( k x^3 \p_{x^1} + \p_{x^4} \right ),\
\xi_3 = - e^{\frac{k x^2}{2}} \p_{x^3},\
\xi_4 = 2 \p_{x^1} - \p_{x^2}.
\end{equation}
These vector fields generate the Lie algebra $\mathfrak{g}$ of the group $G$ with the commutation relations
$$
[\xi_1, \xi_2] = [\xi_1, \xi_3] = [\xi_1, \xi_4] = 0,\
[\xi_2, \xi_3] = k \xi_1,\
[\xi_2, \xi_4] = - \frac{k}{2} \, \xi_2,\
[\xi_3, \xi_4] = \frac{k}{2} \, \xi_3.
$$
It is easy to see that $\mathfrak{g}$ is a one-dimensional central extension of the three-dimensional algebra $\langle \xi_2, \xi_3, \xi_4 \rangle$ of the Bianchi type VI$_0$.

In the tetrad basis \eqref{eq:47}, the McLenaghan--Tariq--Tupper metric takes the form
$$
\| \mathbf{G}_{ij} \| = 
\left (
\begin{array}{rrrr}
1	&	0	&	0	&	-1	\\
0	&	-1	&	0	&	0	\\
0	&	0	&	-1	&	0	\\
-1	&	0	&	0	&	0
\end{array}
\right ).
$$
Then, in accordance with \eqref{eq:50}, we have
\begin{equation}
\label{eq:51}
ds^2 = \mathbf{G}_{ij} \omega^i \omega^j = (\omega^1)^2 - 2 \, \omega^1 \omega^4 - (\omega^2)^2 - (\omega^3)^2,
\end{equation}
where the one-forms $\omega^i$ are dual to the vector fields $\xi_i$:
\begin{equation}
\label{eq:52}
\omega^1 = - d x^1 - 2 d x^2 + k x^3 d x^4,\
\omega^2 = - e^{\frac{k x^2}{2}} d x^4,\
\omega^3 = - e^{- \frac{k x^2}{2}} d x^3,\
\omega^4 = - d x^2.
\end{equation}
It follows from \eqref{eq:51} that the geodesic Hamilton--Jacobi equation \eqref{eq:01} for the McLenaghan--Tariq--Tupper metric can be written as
$$
2 (\xi_1 S) (\xi_4 S) + (\xi_2 S)^2 + (\xi_3 S)^2 + (\xi_4 S)^2 + m^2 = 0,
$$
or, in the explicit form,
\begin{multline}
\label{eq:54}
2 \left ( \frac{\p S}{\p x^1} \right ) \left ( \frac{\p S}{\p x^2} - 2 \, \frac{\p S}{\p x^1} \right ) + e^{- k x^2} \left ( k x^3 \, \frac{\p S}{\p x^1} + \frac{\p S}{\p x^4} \right )^2
 + 
 \\
 + e^{k x^2} \left ( \frac{\p S}{\p x^3} \right )^2 + \left ( \frac{\p S}{\p x^2} - 2 \, \frac{\p S}{\p x^1} \right )^2 + m^2 = 0.
\end{multline}
It is important to note that the Hamilton--Jacobi equation \eqref{eq:54} has non-St\"ackel form, i.e. it cannot be solved by the separation of variables on the configuration space $M$.
In order to make sure this, it is enough to verify that the metric \eqref{eq:41} does not satisfy the necessary and sufficient conditions of separability obtained by V.~N.~Shapovalov~\cite{Sha79}.

In order to construct a complete integral of the Hamilton--Jacobi equation \eqref{eq:54}, we apply the method outlined in Sec.~\ref{sec:3}. 
We note that this problem can be solved by quadratures since it follows from~\eqref{eq:24} that $\mathrm{ind}\, \mathfrak{g} = 2$; therefore the dimension of regular coadjoint orbits of $G$ equals $\dim \mathfrak{g} - \mathrm{ind}\, \mathfrak{g} = 2$ (see Consequence~\ref{con:1}).

The matrix $\| \mathrm{Ad}_{z^{-1}} \|$ of the adjoint representation can be expressed in terms of the Killing vector fields $\eta_i$ and the invariant 1-forms $\omega^i$ as follows
$$
\| \mathrm{Ad}_{z^{-1}} \|^i_j = -\omega^i_k(x) \eta^k_j(x) \big|_{x = \psi_{x_0}(z)},
$$
so that
$$
\| \mathrm{Ad}_{z^{-1}} \| = \left (
\begin{array}{cccc}
1	&	k z^3		&	- k z^2		&	k^2 z^2 z^3/2	\\
0	&	e^{-k z^4/2}&	0			&	k z^2 e^{-k z^4/2}/2		\\
0	& 	0			&	e^{ k z^4/2}&	- k z^3 e^{k z^4/2}/2		\\
0	&	0			&	0		&	1
\end{array}\right ).
$$
In accordance with \eqref{eq:13}, the action of $\mathrm{Ad}^*_z$ on $f = (f_1, f_2, f_3, f_4) \in \mathfrak{g}^*$ takes the form
\begin{equation}
\label{eq:56(a)}
(\mathrm{Ad}_z^*\, f)_1 = f_1,\
(\mathrm{Ad}_z^*\, f)_2 = k z^3 f_1 + e^{- k z^4/2} f_2,\
(\mathrm{Ad}_z^*\, f)_3 = - k z^2 f_1 + e^{k z^4/2} f_3,
\end{equation}
\begin{equation}
\label{eq:56(b)}
(\mathrm{Ad}_z^*\, f)_4 = \frac{k^2}{2}\, z^2 z^3 f_1 + \frac{k}{2}\, z^2 e^{-k z^4/2} f_2 - \frac{k}{2}\, z^3 e^{k z^4/2} f_3 + f_4.
\end{equation}
It is easy to see that the functions
\begin{equation*}
\label{eq:57(a)}
K_1 = f_1,\quad
K_2 = 2 f_1 f_4 + f_2 f_3
\end{equation*}
are functionally independent invariants of this action; therefore the connected components of their level sets are coadjoint orbits of $G$.

Let us introduce a local parametrization of the regular coadjoint orbits: $\lambda(j) = (j_1, 0, 0, j_2)$, $j_1 \neq 0$. 
It is easy to verify that the subalgebra $\mathfrak{h} = \langle \xi_1, \xi_3, \xi_4 \rangle \subset \mathfrak{g}$ is a polarization of $\lambda(j) \in \mathfrak{g}^*$. 
The corresponding canonical coordinates $(p, q)$ on the coadjoint orbit passing through the element $\lambda(j)$ are defined by the relations
\begin{equation}
\label{eq:58}
f_1 = j_1,\quad
f_2 = p,\quad
f_3 = k j_1 q,\quad
f_4 = j_2 - \frac{k}{2}\, p q.
\end{equation}
Here $q$ is a local coordinate on the homogeneous space $Q = \exp(\mathfrak{h}) \setminus G$, which is $G$-invariant Lagrangian submanifold of the orbit~$\mathcal{O}_{\lambda(j)}$.

From \eqref{eq:58}, we obtain that $q = f_3/(k j_1)$. 
Using \eqref{eq:56(a)} and \eqref{eq:56(b)}, we can easy reconstruct the action $\rho:G \times Q \to Q$:
$$
\rho_z(q) = \frac{(\mathrm{Ad}^*_{z^{-1}} f)_3}{k j_1} =  e^{- k z^4/2} \left ( q + z^2 \right ).
$$
From \eqref{eq:45} we get the following expression for the function $\varphi(x,q)$ (see the formula \eqref{eq:25}):
\begin{equation}
\label{eq:60}
\varphi(x,q) = e^{k x^2/2} \left ( q - x^4 \right ).
\end{equation}

The equation \eqref{eq:29} for the function $\tilde{S}_j = \tilde{S}_j \left ( q; m \right )$  can be written by using the explicit form of the functions $f_i(q, p; \lambda)$:
\begin{equation}
\label{eq:61}
\left ( 1 + \frac{k^2 q^2}{4} \right )^2 \left ( \frac{d \tilde{S}_j}{d q} \right )^2 - 
k q \left ( j_1 + j_2 \right ) \, \frac{d \tilde{S}_j}{d q} + k^2 j_1^2 q^2 + 2 j_1 j_2 + j_2^2 + m^2 = 0.
\end{equation}
Expressing the derivative of $\tilde{S}_j(q; m)$ with respect to $q$, we obtain
\begin{equation}
\label{eq:62}
\frac{d \tilde{S}_j(q; m)}{d q} = \frac{2 k (j_1 + j_2) q + 2 \sqrt{D(k^2 q^2; j, m)}}{4 + k^2 q^2},
\end{equation}
where
$$
D(\theta; j, m) = - j_1^2 \theta^2 - ( m^2 + 3 j_1^2) \theta - 4 ( m^2 + 2 j_1 j_2 + j_2^2 ).
$$
The equation \eqref{eq:62} has a real solution in some open set $U \subset \mathbb{R}$ if and only if 
$$
\Delta \geq 0,\quad
\sqrt{\Delta} > m^2 + 3 j_1^2,
$$
where $\Delta \eqdef ( m^2 - j_1^2 + 4 j_1 j_2 )( m^2 - 9 j_1^2 - 4 j_1 j_2 )$ is the discriminant of the quadratic polynomial $D(\theta; j, E)$.
In this case, the quadratic equation $D(\theta; j, E) = 0$ has the real roots
\begin{equation}
\label{eq:65}
\theta_+ = \frac{\sqrt{\Delta} - ( m^2 + 3 j_1^2)}{2 j_1^2} > 0,\quad
\theta_- = - \frac{\sqrt{\Delta} + ( m^2 + 3 j_1^2)}{2 j_1^2} < 0,
\end{equation}
and the domain $U$ is defined as
$$
U = \{ q \in \mathbb{R} \colon - \frac{\sqrt{\theta_+}}{k} < q < \frac{\sqrt{\theta_+}}{k} \}.
$$

The function $\tilde{S}_j(q; m)$ can be found by integration of the right-hand side of \eqref{eq:62} and, after some algebra, can be expressed in terms of incomplete elliptic integrals:
\begin{multline}
\label{eq:67}
\tilde{S}_j(q; m) = \int \limits_0^{q} \frac{2 k (j_1 + j_2) q + 2 \sqrt{D(k^2 q^2; j, m)}}{4 + k^2 q^2} \, d q = 
\\
= \frac{j_1 + j_2}{k} \, \ln \left ( 1 + \frac{k^2 q^2}{4} \right ) + \frac{2 j_1}{k} \left [  \sqrt{\theta_+ - \theta_-} \, \mathrm{E} \left ( \sqrt{1 - \frac{k^2 q^2}{\theta_+}}, \frac{\theta_+}{\theta_+ - \theta_-} \right ) - \right .
\\
\left . - \frac{4 + \theta_+}{\sqrt{\theta_+ - \theta_-}} \, \mathrm{F} \left ( \sqrt{1 - \frac{k^2 q^2}{\theta_+}}, \frac{\theta_+}{\theta_+ - \theta_-} \right ) +
 \right . \\ \left . + 
 \frac{4 + \theta_-}{\sqrt{\theta_+ - \theta_-}} \, \Pi \left ( \sqrt{1 - \frac{k^2 q^2}{\theta_+}}, \frac{\theta_+}{4 + \theta_+}, \frac{\theta_+}{\theta_+ - \theta_-} \right ) \right ].
\end{multline}
Here the functions $\mathrm{F}(z,\kappa)$, $\mathrm{E}(z,\kappa)$, and $\Pi(z,a,\kappa)$ are incomplete elliptic integrals in the Legendre normal forms of the first, second and third kind, respectively~\cite{GraRyz14}. 
The parameters $\theta_+$ and $\theta_-$ are defined by the relations~\eqref{eq:65}.

A complete integral of the Hamilton--Jacobi equation \eqref{eq:54} is given by the formula \eqref{eq:30}. 
Using \eqref{eq:52}, \eqref{eq:58}, and \eqref{eq:60}, we obtain the following expression for $S(x; \alpha)$:
\begin{equation*}
S(x; \alpha) = \tilde{S}_j \left ( e^{k x^2/2} ( q - x^4 ); m \right ) - j_1 x^1 - ( 2 j_1 + j_2 ) x^2 + k j_1 ( x^4 - q ) x^3.
\end{equation*}
Here the function $\tilde{S}_j \left ( q; m \right )$ is given by \eqref{eq:67}, and $\alpha = (q, j_1, j_2, m)$ is a set of parameters.

\section*{Acknowledgements}

The author expresses his deep gratitude to Igor V.~Shirokov for his constant and fruitful discussions and support.
Dr. S.~V.~Danilova is gratefully acknowledged for careful reading of the manuscript.

\bibliography{mybibfile}

\end{document}